\newtheorem{example}{Example}
\newtheorem{proposition}{Proposition}
\newtheorem{definition}{Definition}
\newtheorem{lemma}{Lemma}
\newtheorem{corollary}{Corollary}
\newenvironment{proof}{\noindent\textit{Proof~~}}
{\nolinebreak[4]\hfill$\blacksquare$\\\par}
\title{Strong isomorphism in Marinatto-Weber type quantum games}
\author{Piotr Fr\k{a}ckiewicz}
\affil{Institute of Mathematics\\ Pomeranian University, Poland}
\date{\today}
\begin{document}
\maketitle

\begin{abstract}
Our purpose is to focus attention on a new criterion for quantum schemes by bringing together the notions of quantum game and game isomorphism. A quantum game scheme is required to generate the classical game as a special case. Now, given a quantum game scheme and two isomorphic classical games, we additionally require the resulting quantum games to be isomorphic as well. We show how this isomorphism condition influences the players' strategy sets. We are concerned with the Marinatto-Weber type quantum game scheme and the strong isomorphism between games in strategic form. 
\end{abstract}
%%%%%%%%%%%%%%%%%%%%%%%%%%%%%%%%%%%%%%%%%%
%%%%%%%%%%%%%%%%%%%%%%%%%%%%%%%%%%%%%%%%%%
%%%%%%%%%%%%%%%%%%%%%%%%%%%%%%%%%%%%%%%%%%
\section{Introduction}

The Marinatto-Weber  (MW) scheme introduced in \cite{marinatto} is a straightforward way to apply the power of quantum mechanics to classical game theory. In the simplest case of $2\times 2$ games, the players manipulate their own qubits of a two-qubit state either with the identity $\mathds{1}$ or the Pauli operator $\sigma_{x}$. Therefore, it has found application in many other branches of game theory: from evolutionary game theory \cite{iqbal2}, \cite{nawaz2} to extensive-form games \cite{fracorultimatumgame} and duopoly examples \cite{iqbalduopoly}, \cite{khanbertrand}. In paper~\cite{fracormw} we pointed out a few undesirable properties of the MW scheme and introduced a refined quantum game model. 

Though it is possible to extend both the MW scheme and our refinement to consider more complex games than $2 \times 2$, possible generalizations can be defined in many different ways. A result concerning $3\times 3$ games can be found in \cite{iqbal2} and \cite{iqbal3x3}. The authors proposed suitable three-element sets of players' strategies to obtain a generalized $3\times 3$ game. On the other hand, our work \cite{fracorpolonica} provides another way to define players' strategy sets that remains valid for any finite $n\times m$ games.

Certainly, one can find yet other ways to generalize the MW scheme. Hence it would be interesting to place additional restrictions on a quantum game scheme and examine how they refine the quantum model. In this paper we formulate a criterion in terms of isomorphic games. Given two isomorphic games we require the corresponding quantum games to be isomorphic as well. If, for example, two bimatrix games differ only in the order of players' strategies, they describe the same problem from a game-theoretical point of view. Given a quantum scheme, it appears reasonable to assume that the resulting quantum game will not depend on the numbering of players’ strategies in the classical game.
%%%%%
%%%%%%%%%%%%%%%%%%%%%%%%%%%%%%%%%%%%%%%%%%%%
\section{Preliminaries}
\subsection{Marinatto-weber type quantum game scheme}
In paper \cite{fracormw} and \cite{fracornewscheme} we presented a refinement of the Marinatto-Weber scheme \cite{marinatto}. The motivation of constructing our scheme was twofold. Our model enables the players to choose between playing a fixed quantum strategy and  classical strategies. The second aim was to construct the scheme that generates the classical game by manipulating the players' strategies rather than the initial quantum state. In what follows, we recall the scheme for the case of $2\times 2$ bimatrix game,
\begin{equation}\label{2x2}
\bordermatrix{& l & r \cr t & (a_{00}, b_{00}) & (a_{01}, b_{01}) \cr b & (a_{10}, b_{10}) & (a_{11}, b_{11})}, ~\mbox{where}~(a_{ij}, b_{ij})\in \mathbb{R}.  
\end{equation}
\begin{definition}
The quantum scheme for game~(\ref{2x2}) is defined on an inner product space $(\mathbb{C}^2)^{\otimes 4}$ by the triple
\begin{equation}\label{schememoj}
\Gamma_{Q} = (H, (S_{1}, S_{2}), (M_{1}, M_{2})),
\end{equation}
where
\begin{itemize}
\item $H$ is a positive operator,
\begin{equation}\label{positiveoperator1}
H = (\mathds{1}\otimes \mathds{1} - |11\rangle \langle 11|)\otimes |00\rangle \langle 00| + |11\rangle \langle 11|\otimes |\Psi\rangle \langle \Psi|,
\end{equation}
and
\begin{equation} 
|\Psi\rangle = \alpha|00\rangle + \beta|01\rangle + \gamma|10\rangle + \delta|11\rangle \in \mathbb{C}^2\otimes \mathbb{C}^2
\end{equation}
such that $\| |\Psi \rangle \| = 1$, \item $S_{1} = \left\{P^{(1)}_{i}\otimes U^{(3)}_{j}, i,j=0,1\right\}$, $S_{2} = \left\{P^{(2)}_{k}\otimes U^{(4)}_{l}, k,l=0,1\right\}$ are the players' strategy sets, and the upper indices identify the subspace $\mathbb{C}^2$ of $(\mathbb{C}^2)^{\otimes 4}$ on which the operators
\begin{equation}\label{operatory}
P_{0} = |0\rangle \langle 0|,~ P_{1} = |1\rangle \langle 1|, \quad U_{0} = \mathds{1},~ U_{1} = \sigma_{x},
\end{equation}
are defined,
\item $M_{1}$ and $M_{2}$ are the measurement operators % _{\makebox[0pt]{$\scriptstyle i,j \in \{0,1\}$}}%
\begin{equation}\label{payoffmeasurement}
M_{1(2)} = \mathds{1}\otimes \mathds{1}\otimes \left(\sum_{x,y = 0,1}a_{xy}(b_{xy})|xy\rangle \langle xy|\right)
\end{equation}
that depend on the payoffs~$a_{xy}$ and $b_{xy}$ from (\ref{2x2}).
\end{itemize}
\end{definition}
The scheme proceeds in the similar way as the MW scheme--the players determine the final state by choosing their strategies and acting on operator $H$. As a result, they determine the following density operator:
\begin{align}\label{pierwszydensity}
\rho_{f} &= \left(P^{(1)}_{i}\otimes P^{(2)}_{k}\otimes U^{(3)}_{j}\otimes U^{(4)}_{l}\right)H \left(P^{(1)}_{i}\otimes P^{(2)}_{k}\otimes U^{(3)}_{j}\otimes U^{(4)}_{l}\right)\nonumber \\ &=\begin{cases}|11\rangle \langle 11|\otimes \left(U^{(3)}_{j}\otimes U^{(4)}_{l}|\Psi\rangle \langle \Psi| U^{(3)}_{j}\otimes U^{(4)}_{l}\right) &\mbox{if}~i=j=1, \\ |ij\rangle \langle ij|\otimes \left(U^{(3)}_{j}\otimes U^{(4)}_{l}|00\rangle \langle 00|U^{(3)}_{j}\otimes U^{(4)}_{l}\right) &\mbox{if otherwise}. \end{cases}
\end{align}
Next, the payoffs for player 1 and 2 are
\begin{equation}
\mathrm{tr}(\rho_{f}M_{1})~~\mbox{and}~~\mathrm{tr}(\rho_{f}M_{2}).
\end{equation}
As it was shown in \cite{fracornewscheme}, scheme (\ref{schememoj}) can be summarized by the following matrix game
\begin{equation}\label{matrixmoj}
\bordermatrix{& P^{(2)}_{0}\otimes \mathds{1}^{(4)} & P^{(2)}_{0}\otimes \sigma_{x}^{(4)} & P^{(2)}_{1}\otimes \mathds{1}^{(4)}  & P^{(2)}_{1}\otimes \sigma_{x}^{(4)} \cr P^{(1)}_{0}\otimes \mathds{1}^{(3)} & X_{00} & X_{01} & X_{00} & X_{01} \cr P^{(1)}_{0}\otimes \sigma_{x}^{(3)} & X_{10} & X_{11} & X_{10} & X_{11} \cr P^{(1)}_{1}\otimes \mathds{1}^{(4)} & X_{00} & X_{01} & \Delta_{00} & \Delta_{01} \cr P^{(1)}_{1}\otimes \sigma_{x}^{(4)} & X_{10} & X_{11} & \Delta_{10} & \Delta_{11}},
\end{equation}
where
\begin{align}\label{legenda0}\begin{split}
&X_{ij} = (a_{ij}, b_{ij}), ~~\mbox{for}~~ i,j = 0,1 \cr &\Delta_{00} = |\alpha|^2X_{00} + |\beta|^2X_{01} + |\gamma|^2X_{10} + |\delta|^2X_{11}, \cr &\Delta_{01} = |\alpha|^2X_{01} + |\beta|^2X_{00} + |\gamma|^2X_{11} + |\delta|^2X_{10}, \cr &\Delta_{10} = |\alpha|^2X_{10} + |\beta|^2X_{11} + |\gamma|^2X_{00} + |\delta|^2X_{01}, \cr &\Delta_{11} = |\alpha|^2X_{11} + |\beta|^2X_{10} + |\gamma|^2X_{01} + |\delta|^2X_{00}.  \end{split}
\end{align}
%%%%%%%%%%%%%%%%%%%%%%%%%%%%%%%%%%%%%%
%%%%%%%%%%%%%%%%%%%%%%%%%%%
%%%%%%%%%%%%%%%%%%%%%%%%%%%%%%%%%%%%%%%%%%%%%
\subsection{Strong isomorphism}
The notion of strong isomorphism defines classes of games that are the same up to the numbering of the players and the order of players' strategies.
The following definitions are taken from \cite{garcia} (see also \cite{nash}, \cite{peleg1} and \cite{peleg2}). The first one defines a mapping that associates players and their actions in one game with players and their actions in the other game.
\begin{definition}
Let $\Gamma = (N, (S_{i})_{i\in N}, (u_{i})_{i\in N})$ and $\Gamma' = (N, (S'_{i})_{i\in N}, (u'_{i})_{i\in N})$ be games in strategic form. A game mapping $f$ from $\Gamma$ to $\Gamma'$ is a tuple $f = (\eta, (\varphi_{i})_{i\in N})$, where $\eta$ is a bijection from $N$ to $N$ and for any $i\in N$, $\varphi_{i}$~is a bijection from $S_{i}$ to $S'_{\eta(i)}$.
\end{definition} 
In general case, the mapping $f$ from $(N, (S_{i})_{i\in N}, (u_{i})_{i\in N})$ to $(N, (S'_{i})_{i\in N}, (u'_{i})_{i\in N})$ identifies player $i\in N$ with player $\eta(i)$ and maps $S_{i}$ to $S_{\eta(i)}$. This means that a strategy profile $(s_{1},\dots, s_{n}) \in S_{1} \times \dots \times S_{n}$ is mapped into profile $(s'_{1}, \dots, s'_{n})$ that satisfies equation $s'_{\eta(i)} = \varphi_{i}(s_{i})$ for $i\in N$.

The notion of game mapping is a basis for the definition of game isomorphism. Depending on how rich structure of the game is to be preserved we can distinguish various types of game isomorphism. One that preserves the players' payoff functions is called a strong isomorphism. The formal definition is as follows:
\begin{definition}\label{izo}
Given two strategic games $\Gamma = (N,(S_{i})_{i\in N}, (u_{i})_{i\in N})$ and $\Gamma' = (N, (S'_{i})_{i\in N}, (u'_{i})_{i\in N})$, a game mapping $f = (\eta, (\varphi_{i})_{i\in N})$ is called a strong isomorphism if relation $u_{i}(s) = u'_{\eta(i)}(f(s))$ holds for each $i\in N$ and each strategy profile $s\in S_{1}\times \dots \times S_{n}$. 
\end{definition} 
From the above definition it may be concluded that if there is a strong isomorphism between games $\Gamma$ and $\Gamma'$, they may differ merely by the numbering of players and the order of their strategies.

The following lemma shows that relabeling players and their strategies do not affect the game with regard to Nash equilibria. If $f$ is a strong isomorphism between games $\Gamma$ and $\Gamma'$, one may expect that the Nash equilibria in $\Gamma$ map to ones in $\Gamma'$ under $f$. 
\begin{lemma}\label{lemma1}
Let $f$ be a strong isomorphism between games $\Gamma$ and $\Gamma'$. Strategy profile $s^{*} = (s^*_{1}, \dots, s^*_{n}) \in S_{1} \times \dots \times S_{n}$ is a Nash equilibrium in game $\Gamma$ if and only if $f(s^*) \in S'_{1}\times \dots \times S'_{n}$ is a Nash equilibrium in $\Gamma'$. 
\end{lemma}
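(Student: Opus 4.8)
The plan is to unwind the definition of Nash equilibrium on both sides and to use the bijectivity built into the game mapping $f = (\eta, (\varphi_{i})_{i\in N})$ to match unilateral deviations in $\Gamma$ with unilateral deviations in $\Gamma'$. The first step I would carry out is to record the key structural fact about $f$: if two profiles $s, t \in S_{1}\times\dots\times S_{n}$ differ only in the $i$-th coordinate, then $f(s)$ and $f(t)$ differ only in the $\eta(i)$-th coordinate, because by definition the $\eta(k)$-th coordinate of $f(s)$ equals $\varphi_{k}(s_{k})$ and $\eta$ is a bijection. Writing $f(s) = s'$, this says that for every $i\in N$ the bijection $\varphi_{i}$ induces a bijection between the profiles obtained from $s$ by changing only player $i$'s strategy and the profiles obtained from $s'$ by changing only player $\eta(i)$'s strategy.

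For the forward implication I would assume $s^{*}$ is a Nash equilibrium of $\Gamma$, fix an arbitrary player $j\in N$ and an arbitrary strategy $\bar{s}'_{j}\in S'_{j}$, and set $i = \eta^{-1}(j)$ and $\bar{s}_{i} = \varphi_{i}^{-1}(\bar{s}'_{j})$. The equilibrium inequality for player $i$ in $\Gamma$ applied to the deviation $\bar{s}_{i}$ gives $u_{i}(s^{*}) \geq u_{i}(\bar{s}_{i}, s^{*}_{-i})$. I then translate both sides through the strong-isomorphism identity $u_{i}(s) = u'_{\eta(i)}(f(s))$: the left-hand side becomes $u'_{j}(f(s^{*}))$, and, by the structural fact above, the right-hand side becomes $u'_{j}\bigl(\bar{s}'_{j}, (f(s^{*}))_{-j}\bigr)$. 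Since $j$ and $\bar{s}'_{j}$ were arbitrary, $f(s^{*})$ is a Nash equilibrium of $\Gamma'$.

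For the converse I would avoid repeating the argument by observing that $f^{-1} := \bigl(\eta^{-1}, (\varphi_{\eta^{-1}(i)}^{-1})_{i\in N}\bigr)$ is a game mapping from $\Gamma'$ to $\Gamma$ with $f^{-1}\circ f = \mathrm{id}$, and that it is again a strong isomorphism: substituting $s = f^{-1}(s')$ into $u_{i}(s) = u'_{\eta(i)}(f(s))$ and renaming the index yields $u'_{i}(s') = u_{\eta^{-1}(i)}(f^{-1}(s'))$ for all $i\in N$ and all $s'$. Applying the already-established forward implication to $f^{-1}$ shows that if $f(s^{*})$ is a Nash equilibrium of $\Gamma'$, then $f^{-1}(f(s^{*})) = s^{*}$ is a Nash equilibrium of $\Gamma$.

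The only point requiring genuine care is the bookkeeping in the structural fact of the first paragraph, namely tracking how the permutation $\eta$ of the player set interacts with the per-player bijections $\varphi_{i}$, so that the statement ``player $i$ deviates in $\Gamma$'' corresponds exactly to ``player $\eta(i)$ deviates in $\Gamma'$'' and the universal quantifier over deviations is preserved in both directions. Once that correspondence is pinned down, the rest is a direct substitution using Definition~\ref{izo}, and there is no analytic obstacle.
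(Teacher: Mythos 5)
Your argument is correct and complete. Note that the paper itself states Lemma~\ref{lemma1} without proof (it is imported, along with the surrounding definitions, from the cited literature on game isomorphism), so there is no in-paper proof to compare against; your write-up supplies exactly the standard argument one would expect. The two points that genuinely need checking are both handled properly: the structural fact that a unilateral deviation by player $i$ in $\Gamma$ corresponds under $f$ to a unilateral deviation by player $\eta(i)$ in $\Gamma'$ (which follows from $f(s)_{\eta(k)}=\varphi_{k}(s_{k})$ and the bijectivity of $\eta$ and the $\varphi_{k}$), and the verification that $f^{-1}=\bigl(\eta^{-1},(\varphi_{\eta^{-1}(i)}^{-1})_{i\in N}\bigr)$ is again a strong isomorphism, which lets you obtain the converse from the forward implication rather than repeating it. No gaps.
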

%%%%%%%%%%%%%%%%%%%%%%%%%%%%%%%%%%%%%%%
%%%%%%%%%%%%%%%%%%%%%%%%%%%%%%%%%%%%%%%
%%%%%%%%%%%%%%%%%%%%%%%%%%%%%%%%%%%%%%%
\section{Application of game isomorphism to
Marinatto-Weber type quantum game schemes}
It is not hard to see that we can define a wide variety of schemes based on the MW approach. We can modify operator (\ref{positiveoperator1}) and the players' strategies to construct another scheme still satisfying the requirement about generalization of the input game. The following example of such a scheme is particularly interesting. 

Let us consider a triple
\begin{equation}\label{schemecorrelated}
\Gamma'_{Q} = (H',(S'_{1}, S'_{2}), (M_{1}, M_{2}))
\end{equation}
with the components defined as follows:
\begin{itemize}
\item $H'$ is a positive operator,
\begin{equation}
H' = |00\rangle \langle 00| \otimes |00\rangle \langle 00| + |01\rangle \langle 01| \otimes |0\rangle \langle 0| \otimes \rho_{2} + |10\rangle \langle 10| \otimes \rho_{1} \otimes |0\rangle \langle 0| + |11\rangle \langle 11|\otimes |\Psi\rangle \langle \Psi|,
\end{equation}
where $|\Psi\rangle\in \mathbb{C}^2\otimes \mathbb{C}^2$ such that $\| |\Psi \rangle \| = 1$, $\rho_{1}$ and $\rho_{2}$ are the reduced density operators of $|\Psi \rangle \langle \Psi|$, i.e., $\rho_{1} = \mathrm{tr}_{2}(|\Psi \rangle \langle \Psi|)$ and $\rho_{2} = \mathrm{tr}_{1}(|\Psi \rangle \langle \Psi|)$,
\item $S'_{1} = \left\{P^{(1)}_{0}\otimes \mathds{1}^{(3)}, P^{(1)}_{0}\otimes \sigma^{(3)}_{x}, P^{(1)}_{1}\otimes \mathds{1}^{(3)}\right\}$ and $S'_{2} = \left\{P^{(2)}_{0}\otimes \mathds{1}^{(4)}, P^{(2)}_{0}\otimes \sigma^{(4)}_{x}, P^{(2)}_{1}\otimes \mathds{1}^{(4)}\right\}$ are the players' strategy sets, \item $M_{1}$ and $M_{2}$ are the measurement operators defined by equation~(\ref{payoffmeasurement}).
\end{itemize}
It is immediate that the resulting final state $\rho'_{f}$ is a density operator for each (pure or mixed) strategy profile. For example, player 1's strategy $P^{(1)}_{0}\otimes \sigma^{(3)}_{x}$ and player 2's strategy $P^{(2)}_{1}\otimes \mathds{1}^{(4)}$ imply
\begin{align}\label{rhoprzyklad}
\rho'_{f} = \left(P^{(1)}_{0}\otimes  P^{(2)}_{1}\otimes \sigma^{(3)}_{x}\otimes \mathds{1}^{(4)}\right)H'\left(P^{(1)}_{0}\otimes  P^{(2)}_{1}\otimes \sigma^{(3)}_{x}\otimes \mathds{1}^{(4)}\right) = |01\rangle \langle 01|\otimes |1\rangle \langle 1| \otimes \rho_{2}.
\end{align}
As a result, the players' payoff functifons $u'_{1}$ and $u'_{2}$ given by $\mathrm{tr}(\rho'_{f}M_{1})$ and $\mathrm{tr}(\rho'_{f}M_{2})$, respectively, are well-defined. It is also clear that scheme~(\ref{schemecorrelated}) produces the classical game in a similar way to scheme~(\ref{schememoj}). The players play the classical game as long as they choose the strategies $P_{0}\otimes \mathds{1}$ and $P_{0}\otimes \sigma_{x}$. This can be seen by determining $\mathrm{tr}(\rho'_{f}M_{1(2)})$ for each strategy profile and arranging the obtained values into a matrix. As an example, let us determine $\mathrm{tr}(\rho'_{f}M_{1(2)})$ for the final state $\rho'_{f}$ given by~(\ref{rhoprzyklad}). Let $|\Psi\rangle$ represent a general two qubit state, 
\begin{equation}
|\Psi\rangle = \alpha|00\rangle + \beta|01\rangle + \gamma|10\rangle + \delta|11\rangle. 
\end{equation}
Since
\begin{align}\begin{split}
\rho_{1} = (|\alpha|^2 + |\beta|^2)|0\rangle \langle 0| + (\alpha\gamma^{*} + \beta\delta^*)|0\rangle \langle 1| + (\gamma\alpha^{*} + \delta\beta^*)|1\rangle \langle 0| + (|\gamma|^2 + |\delta|^2)|1\rangle \langle 1|, \\  \rho_{2} = (|\alpha|^2 + |\gamma|^2)|0\rangle \langle 0| + (\alpha\beta^{*} + \gamma\delta^*)|0\rangle \langle 1| + (\beta\alpha^{*} + \delta\gamma^*)|1\rangle \langle 0| + (|\beta|^2 + |\delta|^2)|1\rangle \langle 1|, \end{split}
\end{align}
the players' strategies $P^{(1)}_{0}\otimes \sigma^{(3)}_{x}$ and $P^{(2)}_{1}\otimes \mathds{1}^{(4)}$ generate the following form of the final state:
\begin{equation}
\rho'_{f} = |01\rangle \langle 01| \otimes \Bigl((|\alpha|^2 + |\gamma|^2)|10\rangle \langle 10| + (\alpha\beta^* + \gamma\delta^*)|10\rangle \langle 11| + (\beta\alpha^* + \delta\gamma^*)|11\rangle \langle 10| + (|\beta|^2 + |\delta|^2)|11\rangle \langle 11|\Bigr).
\end{equation}
Hence
\begin{equation}
(\mathrm{tr}(\rho'_{f}M_{1}), \mathrm{tr}(\rho'_{f}M_{2}) )= (|\alpha|^2 + |\gamma|^2)(a_{10}, b_{10}) + (|\beta|^2 + |\delta|^2)(a_{11}, b_{11}).
\end{equation}
The values  $(\mathrm{tr}(\rho'_{f}M_{1}), \mathrm{tr}(\rho'_{f}M_{2}) )$ for all strategy combinations are given by the following matrix:
\begin{equation}\label{macierzfake}
\bordermatrix{& P^{(2)}_{0}\otimes \mathds{1}^{(4)} & P^{(2)}_{0}\otimes \sigma_{x}^{(4)} & P^{(2)}_{1}\otimes \mathds{1}^{(4)}\cr P^{(1)}_{0}\otimes \mathds{1}^{(3)} & X_{00} & X_{01} & \Delta_{02} \cr P^{(1)}_{0}\otimes \sigma_{x}^{(3)} & X_{10} & X_{11} & \Delta_{12} \cr P^{(1)}_{1}\otimes \mathds{1}^{(3)} & \Delta_{20} & \Delta_{21} & \Delta_{22}}
\end{equation}
where
\begin{align}\label{legenda1}
&X_{ij} = (a_{ij}, b_{ij}), ~~\mbox{for}~~ i,j = 0,1 \cr &\Delta_{02} = (|\alpha|^2 + |\gamma|^2)X_{00} + (|\beta|^2 + |\delta|^2)X_{01}; \cr &\Delta_{12} = (|\alpha|^2 + |\gamma|^2)X_{10} + (|\beta|^2 + |\delta|^2)X_{11};\cr &\Delta_{20} = (|\alpha|^2 + |\beta|^2)X_{00} + (|\gamma|^2 + |\delta|^2)X_{10}; \cr &\Delta_{21} = (|\alpha|^2 + |\beta|^2)X_{01} + (|\gamma|^2 + |\delta|^2)X_{11}; \cr &\Delta_{22} =  |\alpha|^2X_{00} + |\beta|^2X_{01} + |\gamma|^2X_{10} + |\delta|^2X_{11}.
\end{align}
It follows easily that matrix game~(\ref{macierzfake}) is a genuine extension of (\ref{2x2}). Although payoff profiles $\Delta_{ij} \ne \Delta_{22}$ are also achievable in (\ref{2x2}), the players, in general, are not able to obtain $\Delta_{22}$ when choosing their (mixed) strategies.

To sum up, scheme (\ref{schemecorrelated}) might seem to be acceptable as long as scheme (\ref{schememoj}) is acceptable. Matrix game (\ref{macierzfake}) includes (\ref{2x2}) and depending on the initial state $|\Psi\rangle$ it may give extraordinary Nash equilibria. It is worth pointing out that the Nash equilibria in (\ref{macierzfake}) correspond to correlated equilibria in (\ref{2x2}), (see \cite{fracorgv}). However scheme  (\ref{schemecorrelated}) fails to imply the isomorphic games when the input games are isomorphic. We can make this clear with the following example.
%%%%%%%%%%%%%%%%%%%%%%%%%%%%%%%%%%%%%%%%%%
\begin{example}\label{1przyklad}
\textup{Let us consider the game of ``Chicken" $\Gamma_1$ and its (strongly) isomorphic counterpart $\Gamma_{2}$,}
\begin{equation}\label{gameschicken}
\Gamma_{1}\colon \;\bordermatrix{& l & r \cr  t & (6,6) & (2,7) \cr b & (7,2) & (0,0)}, \quad \Gamma_{2}\colon \; \bordermatrix{& l' & r' \cr  t' & (2,7) & (6,6) \cr b' & (0,0) & (7,2)}.
\end{equation}
\textup{The corresponding isomorphism $f = (\pi, \varphi_{1}, \varphi_{2})$ is defined by components}
\begin{equation}
\pi(i) = i~~\mbox{\textup{for}}~~ i=1,2, \quad \varphi_{1} = (t \to t', b \to b'), \varphi_{2} = (l \to r', r \to l'). 
\end{equation}
\textup{Set $|\Psi\rangle = (|00\rangle + |01\rangle + |10\rangle)/\sqrt{3}$. Using (\ref{matrixmoj})  we can write quantum approach (\ref{schememoj}) to games (\ref{gameschicken}) as }
\begin{align}\label{przyklad10}
\Gamma_{Q1}\colon \;\bordermatrix{& P^{(2)}_{0}\otimes \mathds{1}^{(4)} & P^{(2)}_{0}\otimes \sigma_{x}^{(4)} & P^{(2)}_{1}\otimes \mathds{1}^{(4)}  & P^{(2)}_{1}\otimes \sigma_{x}^{(4)} \cr P^{(1)}_{0}\otimes \mathds{1}^{(3)} & (6,6) & (2,7) & (6,6) & (2,7) \cr P^{(1)}_{0}\otimes \sigma_{x}^{(3)} & (7,2) & (0,0) & (7,2) & (0,0) \cr P^{(1)}_{1}\otimes \mathds{1}^{(4)} & (6,6) & (2,7) & (5,5) & (2\frac{2}{3}, 4\frac{1}{3}) \cr P^{(1)}_{1}\otimes \sigma_{x}^{(4)} & (7,2) & (0,0) & (4\frac{1}{3}, 2\frac{2}{3}) & (3,3)} \end{align}
\textup{and}
\begin{align}\label{przyklad101}
\Gamma_{Q2}\colon \;\bordermatrix{& P^{(2)}_{0}\otimes \mathds{1}^{(4)} & P^{(2)}_{0}\otimes \sigma_{x}^{(4)} & P^{(2)}_{1}\otimes \mathds{1}^{(4)}  & P^{(2)}_{1}\otimes \sigma_{x}^{(4)} \cr P^{(1)}_{0}\otimes \mathds{1}^{(3)} & (2,7) & (6,6) & (2,7) & (6,6) \cr P^{(1)}_{0}\otimes \sigma_{x}^{(3)} & (0,0) & (7,2) & (0,0) & (7,2) \cr P^{(1)}_{1}\otimes \mathds{1}^{(4)} & (2,7) & (6,6) & (2\frac{2}{3}, 4\frac{1}{3}) & (5, 5) \cr P^{(1)}_{1}\otimes \sigma_{x}^{(4)} & (0,0) & (7,2) & (3,3) & (4\frac{1}{3}, 2\frac{2}{3})}
\end{align}
\textup{It is fairly easy to see that games (\ref{przyklad10}) and (\ref{przyklad101}) differ in the order of the first two strategies and the second two strategies of player 2. Thus, the games are strongly isomorphic. More formally, one can check that a game mapping $\tilde{f} = (\eta, \tilde{\varphi}_{1}, \tilde{\varphi}_{2})$, where }
\begin{align}\begin{split}
&\tilde{\varphi}_{1} = \left(P^{(1)}_{i}\otimes \mathds{1}^{(3)} \to P^{(1)}_{i}\otimes \mathds{1}^{(3)}, P^{(1)}_{i}\otimes \sigma^{(3)}_{x} \to P^{(1)}_{i}\otimes \sigma^{(3)}_{x}\right),\\
&\tilde{\varphi}_{1} = \left(P^{(2)}_{k}\otimes \mathds{1}^{(4)} \to P^{(2)}_{k}\otimes \sigma^{(4)}_{x}, P^{(2)}_{k}\otimes \sigma^{(4)}_{x} \to P^{(2)}_{k}\otimes \mathds{1}^{(4)}\right), \end{split}
\end{align}
\textup{for $i,k =0,1$ is a strong isomorphism. }

\textup{In the next section we prove a more general result about scheme (\ref{schememoj})). }

\textup{Let us now consider scheme (\ref{schemecorrelated}). Matrix~(\ref{macierzfake}) in terms of input games~(\ref{gameschicken}) implies}
\begin{align}\label{gryschemefake}
\Gamma'_{Q1}\colon \; \bordermatrix{& P^{(2)}_{0}\otimes \mathds{1}^{(4)} & P^{(2)}_{0}\otimes \sigma_{x}^{(4)} & P^{(2)}_{1}\otimes \mathds{1}^{(4)}\cr P^{(1)}_{0}\otimes \mathds{1}^{(3)} & (6,6) & (2,7) & (4\frac{2}{3}, 6\frac{1}{3}) \cr P^{(1)}_{0}\otimes \sigma_{x}^{(3)} & (7,2) & (0,0) & (4\frac{2}{3}, 1\frac{1}{3})\cr P^{(1)}_{1}\otimes \mathds{1}^{(3)} & (6\frac{1}{3}, 4\frac{2}{3}) & (1\frac{1}{3},4\frac{2}{3})  & (5,5)}
\end{align}
\textup{and}
\begin{align}\label{gryschemefake1}
\Gamma'_{Q2}\colon \; \bordermatrix{& P^{(2)}_{0}\otimes \mathds{1}^{(4)} & P^{(2)}_{0}\otimes \sigma_{x}^{(4)} & P^{(2)}_{1}\otimes \mathds{1}^{(4)}\cr P^{(1)}_{0}\otimes \mathds{1}^{(3)} & (2,7) & (6,6) & (3\frac{1}{3}, 6\frac{2}{3}) \cr P^{(1)}_{0}\otimes \sigma_{x}^{(3)} & (0,0) & (7,2) & (2\frac{1}{3}, \frac{2}{3})\cr P^{(1)}_{1}\otimes \mathds{1}^{(3)} & (1\frac{1}{3}, 4\frac{2}{3}) & (6\frac{1}{3},4\frac{2}{3})  & (2\frac{2}{3},4\frac{1}{3})}.
\end{align}
\textup{With Lemma \ref{lemma1} we can show that games (\ref{gryschemefake}) and (\ref{gryschemefake1}) are not isomorphic. Comparing the sets of pure Nash equilibria in both games we find the equilibrium profiles 
\begin{equation}
\{(P^{(1)}_{0} \otimes \mathds{1}^{(3)}, P^{(2)}_{0} \otimes \sigma_{x}^{(4)}), (P^{(1)}_{0} \otimes \sigma_{x}^{(3)}, P^{(2)}_{0} \otimes \mathds{1}^{(4)} ), (P^{(1)}_{1} \otimes \mathds{1}^{(3)}, P^{(2)}_{1} \otimes \mathds{1}^{(4)} )\}
\end{equation}
in the first game and 
\begin{equation}
\{(P^{(1)}_{0} \otimes \mathds{1}^{(3)}, P^{(2)}_{0} \otimes \mathds{1}^{(4)}), (P^{(1)}_{0} \otimes \sigma_{x}^{(3)}, P^{(2)}_{0} \otimes \sigma_{x}^{(4)} )\} 
\end{equation}
in the second one.}
\end{example}
%%%%%%%%%%%%%%%%%%%%%%%%%%%%%%%%%%%%%%%%
\section{Application of game isomorphism to generalized Marinatto-Weber quantum game scheme}
Additional criteria for a quantum game scheme may have a significant impact on the way how we generalize these schemes. It can be easily seen in the case of the MW scheme \cite{marinatto} (or the refined scheme (\ref{schememoj})), where the sets of unitary strategies are finite. The MW scheme provides us with a quantum model, where the strategy sets consist of the identity operator $\mathds{1}$ and the Pauli operator $\sigma_{x}$. Under this description, what subsets of unitary operators would be suitable for general $n \times m$ games? The case of a 3-element strategy set can be identified with unitary operators $\mathds{1}_{3}$, $C$ and $D$ acting on $\alpha|0\rangle + \beta|1\rangle + \gamma|2\rangle \in \mathbb{C}^3$, where
\begin{equation}\label{operatoryiqbala}
\begin{array}{lll}
\mathds{1}_{3}|0\rangle = |0\rangle, & C|0\rangle = |2\rangle, & D|0\rangle = |1\rangle, \\
\mathds{1}_{3}|1\rangle = |1\rangle, & C|1\rangle = |1\rangle, & D|1\rangle = |0\rangle, \\ 
\mathds{1}_{3}|2\rangle = |2\rangle, & C|2\rangle = |0\rangle, & D|2\rangle = |2\rangle.
\end{array}
\end{equation}
This construction can be found in \cite{iqbal2} and \cite{iqbal3x3}. Another way to generalize the MW scheme was presented in \cite{fracorpolonica}. Having given a strategic-form game, we identify the players' $n$ strategies with $n$ unitary operators $V_{k}$ for $k=0,1,\dots, n-1$. They act on states of the computational basis $\{|0\rangle, |1\rangle, \dots, |n-1\rangle\}$ as follows:
\begin{equation}\label{operatorypolonica}
\begin{array}{l}
V_{0}|i\rangle = |i\rangle,\\
V_{1}|i\rangle = |i+1 ~\mathrm{mod}~ n\rangle,\\
\quad\vdots\\
V_{n-1}|i\rangle = |i+n-1 ~\mathrm{mod}~ n\rangle.
\end{array}
\end{equation}
Both ways to generalize the MW scheme enable us to obtain the classical game. So at this level, neither (\ref{operatoryiqbala}) nor (\ref{operatorypolonica}) is questionable. If we seek other properties, we see that the MW scheme outputs the classical game (or its isomorphic counterpart) when the initial state is one of the computational basis states. Given (\ref{operatoryiqbala}) and (\ref{operatorypolonica}), only the latter case satisfies this condition. Further analysis would show that the MW scheme is invariant with respect to strongly isomorphic input games. It turns out that neither (\ref{operatoryiqbala}) or (\ref{operatorypolonica}) satisfies the isomorphism property.
%%%%%%%%%%%%%%%%%%%%%%%%%%%%%%%%%%%%%%%%
\begin{example}\label{przyklad2}
\textup{Let us take a look at the following $2\times 3$ bimatrix games:}
\begin{equation}\label{grydlaroznych}
\Gamma\colon \; \bordermatrix{& l & m & r \cr
t & (4,8) & (0,0) & (8,8) \cr 
b & (0,4) & (4,0) & (8,0)
}, \quad \Gamma'\colon \; \bordermatrix{& l' & m' & r' \cr
t' & (0,0) & (4,8) & (8,8) \cr 
b' & (4,0) & (0,4) & (8,0)
}.
\end{equation}
\textup{Consider the MW-type approaches $\Gamma_{Q}$ and $\Gamma'_{Q}$ to games (\ref{grydlaroznych}) according to the following assignement:}
\begin{equation}
\bordermatrix{& l & m & r \cr
t & P_{00} & P_{01} & P_{02} \cr 
b & P_{10} & P_{11} & P_{12}
}, \quad \mbox{\textup{where}}~P_{j_{1}j_{2}} = |j_{1}j_{2}\rangle \langle j_{1}j_{2}|.
\end{equation}
\textup{Then}
\begin{equation}
\Gamma_{Q} = (|\Psi\rangle, (D_{1}, D_{2}), (M_{1}, M_{2})), \quad \Gamma'_{Q} = (|\Psi\rangle, (D'_{1}, D'_{2}), (M'_{1}, M'_{2})),
\end{equation}
\textup{where}
\begin{align}\begin{array}{ll}
M_{1} = 4P_{00} + 8P_{02} + 4P_{11} + 8P_{12}, &  M'_{1} = 4P_{01} + 8P_{02} + 4P_{10} + 8P_{12},\\ 
M_{2} =  8P_{00} + 8P_{02} + 4P_{10}, &  M'_{2} = 8P_{01} + 8P_{02} + 4P_{11}. \end{array}
\end{align}
\textup{Set the initial state $|\Psi\rangle = (1/2)|00\rangle + (\sqrt{3}/2)|12\rangle \in \mathbb{C}^2\otimes \mathbb{C}^3$ and assume first that $D_{1} = D'_{1} = \{\mathds{1}_{2}, \sigma_{x}\}$ and $D_{2} = D'_{2} = \{\mathds{1}_{3}, C, D\}$ as in (\ref{operatoryiqbala}). Determining $\mathrm{tr}\left((U_{1}\otimes U_{2})|\Psi\rangle \langle \Psi|(U^{\dagger}_{1}\otimes U^{\dagger}_{2})M_{i}\right)$ for every $U_{1}\otimes U_{2}\in \{\mathds{1}, \sigma_{x}\} \otimes \{\mathds{1}_{3}, C, D\}$ and $i=1,2$, and doing similar calculations in the case of $M'_{i}$ we obtain}
\begin{equation}\label{pair1}
\Gamma_{Q}\colon \;\bordermatrix{& \mathds{1}_{3} & C & D \cr
\mathds{1}_{2} & (7,2) & (2,5) & (6,0) \cr 
\sigma_{x} & (6,7) & (5,6) & (7,6)
}, \quad \Gamma'_{Q}\colon \;\bordermatrix{& \mathds{1}_{3} & C & D \cr
\mathds{1}_{2} & (6,0) & (5,2) & (7,2) \cr 
\sigma_{x} & (7,6) & (2,0) & (6,7)
}.
\end{equation}
\textup{On the other hand, replacing (\ref{operatoryiqbala}) by (\ref{operatorypolonica}) gives $D_{2} = D'_{2} = \{\mathds{1}_{3}, V_{1}, V_{2}\}$, where}
\begin{equation}\label{operatorypolonica2}
\mathds{1}_{3} =  \left(\begin{array}{ccc} 1 & 0 & 0\\ 0 & 1 & 0 \\ 0 & 0 & 1 \end{array}\right), \quad V_{1} = \left(\begin{array}{ccc} 0 & 0 & 1\\ 1 & 0 & 0 \\ 0 & 1 & 0 \end{array}\right), \quad V_{2} = \left(\begin{array}{ccc} 0 & 1 & 0\\ 0 & 0 & 1 \\ 1 & 0 & 0 \end{array}\right).
\end{equation}
\textup{Then, we have}
\begin{equation}\label{pair2}
\Gamma_{Q}\colon \;\bordermatrix{& \mathds{1}_{3} & V_{1} & V_{2} \cr
\mathds{1}_{2} & (7,2) & (0,3) & (5,2) \cr 
\sigma_{x} & (6,7) & (4,6) & (2,0)
}, \quad \Gamma'_{Q}\colon \;\bordermatrix{& \mathds{1}_{3} & V_{1} & V_{2} \cr
\mathds{1}_{2} & (6,0) & (4,2) & (2,5) \cr 
\sigma_{x} & (7,6) & (0,1) & (5,6)
}.
\end{equation} 
\textup{There is no pure Nash equilibrium in the first game of (\ref{pair1}) and (\ref{pair2}), whereas there are two Nash equilibria in the second games. As a result, each pair of the games do not determine a strong isomorphism.}
%%%%%%%%%%%%%%%%%%%%%%%%%%%%%%%%%%%%%%%
\end{example}
%%%%%%%%%%%%%%%%%%%%%%%%%%%%%%%%%%%%%%%%
Example~\ref{przyklad2} shows that players' strategy sets defined by (\ref{operatoryiqbala}) and (\ref{operatorypolonica2}) need to be revised in order to have a generalized MW scheme invariant with respect to the isomorphism. We shall stick for the moment to considering games (\ref{grydlaroznych}). Let $\{A_{012}, A_{102}, A_{021}, A_{120}, A_{201}, A_{210}\}$ be player 2's strategy set defined to be
\begin{equation}\label{operatorpermutation}
\begin{array}{llllll}
A_{012}|0\rangle = |0\rangle & A_{102}|0\rangle = |1\rangle & A_{021}|0\rangle = |0\rangle & A_{120}|0\rangle = |1\rangle & A_{201}|0\rangle = |2\rangle & A_{210}|0\rangle = |2\rangle,\\
A_{012}|1\rangle = |1\rangle & A_{102}|1\rangle = |0\rangle & A_{021}|1\rangle = |2\rangle & A_{120}|1\rangle = |2\rangle  & A_{201}|1\rangle = |0\rangle & A_{210}|1\rangle = |1\rangle,\\
A_{012}|2\rangle = |2\rangle & A_{102}|2\rangle = |2\rangle & A_{021}|2\rangle = |1\rangle & A_{120}|2\rangle = |0\rangle & A_{201}|2\rangle = |1\rangle & A_{210}|2\rangle = |0\rangle.
\end{array}
\end{equation}
Each $A_{j_{1}j_{2}j_{3}}$ is a permutation matrix that corresponds to a specific permutation $\pi = (0\to j_{1}, 1 \to j_{2}, 2 \to j_{3}) $ of the set $\{0,1,2\}$. Note also that operators (\ref{operatoryiqbala}) and (\ref{operatorypolonica2}) are included in~(\ref{operatorpermutation}). Hence, the MW scheme with (\ref{operatorpermutation}) implies, in particular, the classical game. We now check if it outputs the isomorphic games. Since there are now six operators available for player 2, the resulting game may be written as a $2\times 6$ bimatrix game with entries 
\begin{equation}
\mathrm{tr}\left((U_{1}\otimes U_{2})|\Psi\rangle \langle \Psi|(U^{T}_{1} \otimes U^{T}_{2})M_{i}\right)
\end{equation}
for $U_{1} \in \{\mathds{1}_{2}, \sigma_{x}\}$ and $U_{2} \in \{A_{\pi}\colon \pi - \mbox{permutations of}~\{0,1,2\}\}$. As a result, we obtain
\begin{align}
\Gamma_{Q}\colon \;\bordermatrix{& A_{012} & A_{102} & A_{021} & A_{120} & A_{201} & A_{210} \cr
\mathds{1}_{2} & (7,2) & (6,0) & (4,2) & (0,3) & (5,2) & (2,5) \cr 
\sigma_{x} & (6,7) & (7,6) & (0,1) & (4,6) & (2,0) & (5,6)
}\end{align}
\mbox{and}\\
\begin{align}
\Gamma'_{Q}\colon \;\bordermatrix{& A_{012} & A_{102} & A_{021} & A_{120} & A_{201} & A_{210} \cr
\mathds{1}_{2} & (6,0) & (7,2) & (0,3) & (4,2) & (2,5) & (5,2) \cr 
\sigma_{x} & (7,6) & (6,7) & (4,6) & (0,1) & (5,6) & (2,0)
}.
\end{align}
The games determine the isomorphism $\tilde{f} = (\mathrm{id}_{N}, \tilde{\varphi}_{1}, \tilde{\varphi}_{2})$, where
\begin{align}\begin{split}
&\tilde{\varphi}_{1}= (\mathds{1}_{2} \to \mathds{1}_{2}, \sigma_{x} \to \sigma_{x}),\\
&\tilde{\varphi}_{2} = (A_{012} \to A_{102}, A_{102} \to A_{012}, A_{021} \to A_{120}, A_{120} \to A_{021}, A_{201} \to A_{210}, A_{210} \to A_{201}). \end{split}
\end{align}

Using permutation matrices leads us to formulate another generalized MW scheme. For simplicity, we confine attention to $(n+1)\times (m+1)$ bimatrix games.

Let $S_{n}$ be the set of all permutations $\pi$ of $\{0,1,\dots, n\}$. With each $\pi$ there is associated a permutation matrix $A_{\pi}$,
\begin{equation}\label{operatorpermutation2}
A_{\pi}|i\rangle = |\pi(i)\rangle ~ \mbox{for}~i=0,1,\dots,n. 
\end{equation}
We let $B_{\sigma}$ denote the permutation matrix associated with a permutation $\sigma \in S_{m}$.
Given $(n+1)\times (m+1)$ bimatrix game $\Gamma$ we define 
\begin{equation}\label{propermw}
\Gamma_{Q} = (|\Psi\rangle, (D_{1}, D_{2}), (M_{1}, M_{2})), 
\end{equation}
where 
\begin{align}\label{dopropermw}\begin{split}
&|\Psi\rangle = \sum^{n}_{j_{1}=0}\sum^{m}_{j_{2} = 0}\alpha_{j_{1}j_{2}}|j_{1}j_{2}\rangle \in \mathbb{C}^{n+1}\otimes \mathbb{C}^{m+1}, \;D_{1} = \{A_{\pi}\colon \pi \in S_{n}\}, \;D_{2} = \{B_{\sigma}\colon \sigma \in S_{m}\}, \\
&(M_{1}, M_{2}) = \sum^{n}_{j_{1} = 0}\sum^{m}_{j_{2} = 0}(a_{j_{1}j_{2}},  b_{j_{1}j_{2}})P_{j_{1}j_{2}}. \end{split}
\end{align}
Before stating the main result of this section we start with the observation that the MW scheme remains invariant to numbering of the players. Consider two isomorphic bimatrix games:
\begin{align}\label{gryorderplayers}
\Gamma\colon \; \bordermatrix{& t_{0} & t_{1} & \cdots & t_{m} \cr
s_{0} & (a_{00}, b_{00}) & (a_{01}, b_{01}) & \cdots & (a_{0m}, b_{0m})\cr 
s_{1} & (a_{10}, b_{10}) & (a_{11}, b_{11}) & \cdots & (a_{1m}, b_{1m}) \cr
\;\vdots & \vdots & \vdots & \ddots & \vdots \cr
s_{n} & (a_{n0}, b_{n0}) & (a_{n1}, b_{n1}) & \cdots & (a_{nm}, b_{nm})} 
\end{align}
\mbox{and}
\begin{align}
\Gamma' \colon \; \bordermatrix{& s'_{0} & s'_{1} & \cdots & s'_{n} \cr
t'_{0} & (b_{00}, a_{00}) & (b_{10}, a_{10}) & \cdots & (b_{n0}, a_{n0})\cr
t'_{1} & (b_{01}, a_{01}) & (b_{11}, a_{11}) & \cdots & (b_{n1}, a_{n1}) \cr
\;\vdots & \vdots & \vdots & \ddots & \vdots \cr
t'_{m} & (b_{0m}, a_{0m}) & (b_{1m}, a_{1m}) & \cdots & (b_{nm}, a_{nm})}.
\end{align}
Clearly, the isomorphism is defined by a game mapping $f = \{\pi, \varphi_{1}, \varphi_{2}\}$, where 
\begin{equation}
\pi = (1\to 2, 2 \to 1), \quad \varphi_{1}(s_{j_{1}}) = s'_{j_{1}}, ~ \varphi_{2}(t_{j_{2}}) = t'_{j_{2}}
\end{equation}
for $j_{1} = 0,1,\dots, n$, $j_{2} = 0,1,\dots, m$. The general MW scheme for (\ref{gryorderplayers}) is simply given by (\ref{propermw}).
For $\Gamma'$, we can write 
\begin{equation}\label{propermw2}
\Gamma'_{Q} = (|\Psi'\rangle, (D'_{1}, D'_{2}), (M'_{1}, M'_{2})),
\end{equation}
where
\begin{align}\begin{split}
&|\Psi'\rangle = \sum^{n}_{j_{1}=0}\sum^{m}_{j_{2}=0}\alpha_{j_{1}j_{2}}|j_{2}j_{1}\rangle \in \mathbb{C}^{m+1}\otimes \mathbb{C}^{n+1},\\
&D'_{1} = \{B_{\sigma}, \sigma \in S_{m}\}, D'_{2} = \{A_{\pi}, \pi \in S_{n}\},\\
&M'_{1} = \sum^{n}_{j_{1}=0}\sum^{m}_{j_{2} = 0}b_{j_{1}j_{2}}|j_{2}j_{1}\rangle \langle j_{2}j_{1}|, M'_{2} = \sum^{n}_{j_{1}=0}\sum^{m}_{j_{2} = 0}a_{j_{1}j_{2}}|j_{2}j_{1}\rangle \langle j_{2}j_{1}|. \end{split}
\end{align}
Games determined by (\ref{propermw}) and (\ref{propermw2}) are then isomorphic. To prove this, let $\tilde{f} = (\pi, \tilde{\varphi}_{1}, \tilde{\varphi}_{2})$ be a game mapping such that 
\begin{equation}
\pi = (1\to 2, 2 \to 1), \quad \tilde{\varphi}_{1}\colon D_{1} \to D'_{2},~ \tilde{\varphi}_{1}(A_{\pi}) = A_{\pi}, \quad \tilde{\varphi}_{2}\colon D_{2} \to D'_{1},~ \tilde{\varphi}_{2}(B_{\sigma}) = B_{\sigma}. 
\end{equation}
On account of Definition~\ref{izo} we have
\begin{equation}
\tilde{f}(A_{\pi} \otimes B_{\sigma}) = \varphi_{2}(B_{\sigma})\otimes \varphi_{1}(A_{\pi}) = B_{\sigma} \otimes A_{\pi}.
\end{equation}
As a result,
\begin{align}
u'_{\pi(1)}(\tilde{f}(A_{\pi} \otimes B_{\sigma})) &= u'_{2}(\tilde{f}(A_{\pi} \otimes B_{\sigma})) \nonumber\\
&=\mathrm{tr}\left(\tilde{f}(A_{\pi} \otimes B_{\sigma})|\Psi'\rangle \langle \Psi'|\tilde{f}(A_{\pi} \otimes B_{\sigma})^TM'_{2}\right) \nonumber \\
&=\mathrm{tr}\left((\varphi_{2}(B_{\sigma})\otimes \varphi_{1}(A_{\pi}))|\Psi'\rangle \langle \Psi'|(\varphi_{2}(B_{\sigma})^T\otimes \varphi_{1}(A_{\pi})^T)M'_{2}\right)\nonumber\\
&=\mathrm{tr}\left((B_{\sigma}\otimes A_{\pi})|\Psi'\rangle \langle \Psi'|(B^T_{\sigma}\otimes A^T_{\pi})\sum^{n}_{j_{1}=0}\sum^{m}_{j_{2} = 0}a_{j_{1}j_{2}}|j_{2}j_{1}\rangle \langle j_{2}j_{1}|\right)\nonumber\\
&=\mathrm{tr}\left((A_{\pi}\otimes B_{\sigma})|\Psi\rangle \langle \Psi|(A^T_{\pi}\otimes B^T_{\sigma})\sum^{n}_{j_{1}=0}\sum^{m}_{j_{2} = 0}a_{j_{1}j_{2}}|j_{1}j_{2}\rangle \langle j_{1}j_{2}|\right)\nonumber \\
&=u_{1}(A_{\pi}\otimes B_{\sigma}).
\end{align}
By a similar argument, we can show that $u'_{\pi(2)}(\tilde{f}(A_{\pi} \otimes B_{\sigma})) = u_{2}(A_{\pi}\otimes B_{\sigma})$.
%%%%%%%%%%%%%%%%%%%%%%%%%%%%%%%%%%%%%%%%%%%%%%%%%%%%%%%%%%%%%%%%%%%%%%%%%%%%%%%%%%%%%%%%%%%%%%%%%%%%%%%%
We can now formulate the following proposition:
%%%%%%%%%%%%%%%%%
\begin{proposition}\label{prop}
Assume that $\Gamma$ and $\Gamma'$ are strongly~isomorphic bimatrix games and $\Gamma_{Q}$ and $\Gamma'_{Q}$ are the corresponding quantum games defined by~(\ref{propermw}). Then $\Gamma_{Q}$ and $\Gamma'_{Q}$ are strongly isomorphic.
\end{proposition}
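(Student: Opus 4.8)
The plan is to reduce the general case to two elementary building blocks: a permutation of players and a permutation of each player's strategies, and then compose them. Since any strong isomorphism $f = (\eta,(\varphi_i)_{i\in N})$ between two bimatrix games $\Gamma$ and $\Gamma'$ is a composition of (i) a possible swap of the two players (the case $\eta = (1\to 2, 2\to 1)$) and (ii) relabelings of the strategies of each player (the case $\eta = \mathrm{id}$), and since the player-swap case has already been verified in the paragraph immediately preceding the proposition, it suffices to treat the case $\eta = \mathrm{id}$ and then chain the two. So first I would state explicitly that we may assume $\eta = \mathrm{id}_N$, with $\varphi_1$ induced by a permutation $\tau$ of $\{0,1,\dots,n\}$ (i.e. $\varphi_1(s_{j_1}) = s'_{\tau(j_1)}$) and $\varphi_2$ induced by a permutation $\upsilon$ of $\{0,1,\dots,m\}$.

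Next I would unwind what strong isomorphism of $\Gamma$ and $\Gamma'$ says at the level of payoff entries: $a_{j_1 j_2} = a'_{\tau(j_1)\upsilon(j_2)}$ and $b_{j_1 j_2} = b'_{\tau(j_1)\upsilon(j_2)}$ for all $j_1,j_2$. Equivalently, writing $A_\tau$ and $B_\upsilon$ for the permutation matrices of $\tau$ and $\upsilon$, the measurement operators transform as $M'_1 = (A_\tau\otimes B_\upsilon)M_1(A_\tau\otimes B_\upsilon)^T$ and likewise for $M'_2$, because conjugating $P_{j_1 j_2} = |j_1 j_2\rangle\langle j_1 j_2|$ by $A_\tau\otimes B_\upsilon$ sends it to $P_{\tau(j_1)\upsilon(j_2)}$. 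I would also record that $\Gamma'_Q$ has initial state $|\Psi'\rangle$, which for the MW-type scheme we may take to be $|\Psi'\rangle = (A_\tau\otimes B_\upsilon)|\Psi\rangle$ — this is the natural choice making the construction covariant, and I expect the intended reading of "the corresponding quantum games" to fix $|\Psi'\rangle$ this way (this point is worth making explicit, since it is the crux of why the scheme respects isomorphism whereas the schemes in the earlier examples do not).

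Then I would define the candidate game isomorphism $\tilde f = (\mathrm{id}_N, \tilde\varphi_1, \tilde\varphi_2)$ between $\Gamma_Q$ and $\Gamma'_Q$ by $\tilde\varphi_1(A_\pi) = A_{\tau\pi\tau^{-1}}$ on $D_1 = \{A_\pi : \pi\in S_n\}$ and $\tilde\varphi_2(B_\sigma) = B_{\upsilon\sigma\upsilon^{-1}}$ on $D_2 = \{B_\sigma : \sigma\in S_m\}$; these are bijections of the strategy sets because conjugation is a bijection of $S_n$ (resp. $S_m$), and in matrix terms $\tilde\varphi_1(A_\pi) = A_\tau A_\pi A_\tau^{-1} = A_\tau A_\pi A_\tau^T$. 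The verification of the strong-isomorphism identity $u_1(A_\pi\otimes B_\sigma) = u'_1(\tilde f(A_\pi\otimes B_\sigma))$ is then a direct trace computation, exactly parallel to the displayed computation preceding the proposition: starting from $u'_1(\tilde f(A_\pi\otimes B_\sigma)) = \mathrm{tr}\!\left((A_\tau A_\pi A_\tau^T\otimes B_\upsilon B_\sigma B_\upsilon^T)|\Psi'\rangle\langle\Psi'|(\cdots)^T M'_1\right)$, substitute $|\Psi'\rangle = (A_\tau\otimes B_\upsilon)|\Psi\rangle$ and $M'_1 = (A_\tau\otimes B_\upsilon)M_1(A_\tau\otimes B_\upsilon)^T$, and use that $A_\tau^T A_\tau = \mathds{1}$ (permutation matrices are orthogonal) to collapse all the conjugating factors, leaving $\mathrm{tr}\!\left((A_\pi\otimes B_\sigma)|\Psi\rangle\langle\Psi|(A_\pi^T\otimes B_\sigma^T)M_1\right) = u_1(A_\pi\otimes B_\sigma)$; the same works for $u_2$. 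Finally, to handle a general $\eta$, compose this $\tilde f$ with the player-swap isomorphism already established, using that a composition of strong isomorphisms is a strong isomorphism.

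The main obstacle is not the algebra — the trace manipulation is routine once orthogonality of permutation matrices is invoked — but rather pinning down the definition of $\Gamma'_Q$: the proposition only makes sense if "the corresponding quantum game" prescribes $|\Psi'\rangle$ in terms of $|\Psi\rangle$ and the isomorphism data, and one must argue (or stipulate) that the relabeling of the basis forces $|\Psi'\rangle = (A_\tau\otimes B_\upsilon)|\Psi\rangle$ up to the irrelevant global phase. Once that is granted, the mixed-strategy case needs no separate treatment, since payoffs are multilinear in the players' mixing probabilities and the bijections $\tilde\varphi_i$ act on pure strategies, so invariance on pure profiles propagates to mixed profiles automatically.
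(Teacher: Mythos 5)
Your overall architecture matches the paper's: reduce to $\eta=\mathrm{id}_N$ via the already-established player-swap case, encode the strategy relabeling by permutation matrices $A_\tau,B_\upsilon$, observe that $M'_i=(A_\tau\otimes B_\upsilon)M_i(A_\tau\otimes B_\upsilon)^T$, and collapse a trace using orthogonality of permutation matrices. But there is a genuine gap at the point you yourself flag as the crux: the initial state of $\Gamma'_Q$. In the scheme (\ref{propermw})--(\ref{dopropermw}) the state $|\Psi\rangle$ is a parameter of the protocol, not data derived from the classical game; ``the corresponding quantum games'' means both games are played with the \emph{same} $|\Psi\rangle$, and only the payoff operators $M_i$ (which are built from the classical payoffs) differ. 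This is exactly the convention used in Example~\ref{przyklad2}, where the same state $(1/2)|00\rangle+(\sqrt{3}/2)|12\rangle$ is used for both $\Gamma_Q$ and $\Gamma'_Q$ and the schemes (\ref{operatoryiqbala}) and (\ref{operatorypolonica}) are shown to fail; if one were allowed to replace $|\Psi\rangle$ by $(A_\tau\otimes B_\upsilon)|\Psi\rangle$ as you stipulate, the whole quantum game would just be conjugated by a fixed local unitary and the proposition would be a near-tautology that no longer discriminates between schemes.

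Under the correct (fixed-state) convention your candidate isomorphism $\tilde\varphi_1(A_\pi)=A_\tau A_\pi A_\tau^{T}$, $\tilde\varphi_2(B_\sigma)=B_\upsilon B_\sigma B_\upsilon^{T}$ does not work: cycling $(A_\tau\otimes B_\upsilon)^T$ around the trace gives
\begin{equation*}
u'_1\bigl(\tilde f(A_\pi\otimes B_\sigma)\bigr)=\mathrm{tr}\Bigl(\bigl(A_\pi A_\tau^{T}\otimes B_\sigma B_\upsilon^{T}\bigr)|\Psi\rangle\langle\Psi|\bigl(A_\pi A_\tau^{T}\otimes B_\sigma B_\upsilon^{T}\bigr)^{T}M_1\Bigr)=u_1\bigl(A_{\pi\tau^{-1}}\otimes B_{\sigma\upsilon^{-1}}\bigr),
\end{equation*}
which is not $u_1(A_\pi\otimes B_\sigma)$ for generic $|\Psi\rangle$. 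The repair is to use left translation instead of conjugation: $\tilde\varphi_1(A_\pi)=A_\tau A_\pi$ and $\tilde\varphi_2(B_\sigma)=B_\upsilon B_\sigma$ (this is what the paper does, and what the explicit map in Example~\ref{przyklad2} realizes). Then the factors $A_\tau^{T}A_\tau$ and $B_\upsilon^{T}B_\upsilon$ cancel with the state left alone, and one sees the real content of the proposition: the strategy sets must be the full permutation groups so that they are closed under left multiplication by $A_\tau$, $B_\upsilon$ --- precisely the closure that the three-element sets in (\ref{operatoryiqbala}) and (\ref{operatorypolonica}) lack. Your remaining points (bijectivity of the strategy map, the composition with the player swap, and the automatic extension to mixed strategies by multilinearity) are fine once this substitution is made.
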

\begin{proof}
Let $\Gamma$ and $\Gamma'$ be bimatrix games of dimension $n\times m$ and let $f\colon \Gamma \to \Gamma'$,  $f = (\eta, \varphi_{1}, \varphi_{2})$ be the strong isomorphism. Since the MW scheme is invariant to numbering of the players, there is no loss of generality in assuming $\eta = \mathrm{id}_{N}$. Now, it follows from Definition \ref{izo} that games $\Gamma$ and $\Gamma'$ differ in the order of players' strategies. Let us identify players' strategies (i.e., rows and columns) in $\Gamma$  with sequences $(0,1, \dots, n)$ and $(0,1, \dots, m)$, respectively. Then, we denote by $\pi^{*}$ and $\sigma^{*}$ the permutations of the sets $\{0,1, \dots, n\}$ and $\{0,1,\dots, m\}$ associated with the order of strategies in game $\Gamma'$. A trivial verification shows that the payoff operator $M'_{i}$ in $\Gamma'_{Q}$ may be written as
\begin{equation}
M'_{i} = \left(A_{\pi^{*}}\otimes B_{\sigma^{*}}\right)M_{i}\left(A_{\pi^{*}}\otimes B_{\sigma^{*}}\right)^T,
\end{equation}
where $A_{\pi^{*}}$ and $B_{\sigma^{*}}$ are the permutation matrices corresponding to $\pi^{*}$ and $\sigma^{*}$. Define a game mapping $\tilde{f} = (\mathrm{id}_{N}, \tilde{\varphi}_{1}, \tilde{\varphi}_{2})$, where
\begin{align}\begin{split}
&\tilde{\varphi}_{1}\colon \{A_{\pi}\colon \pi \in S_{n}\} \to \{A_{\pi}\colon \pi \in S_{n}\}, \; \varphi_{1}(A_{\pi}) = A_{\pi^{*}}A_{\pi}; \cr
&\tilde{\varphi}_{2}\colon \{B_{\sigma}\colon \sigma \in S_{m}\} \to \{B_{\sigma}\colon \sigma \in S_{m}\}, \; \varphi_{2}(B_{\sigma}) = B_{\sigma^{*}}B_{\sigma}. \end{split}
\end{align}
Hence, $\tilde{f}$ maps $A_{\pi} \otimes B_{\sigma}$ to $A_{\pi^*}A_{\pi} \otimes B_{\sigma^*}B_{\sigma}$. Thus, we obtain
\begin{align}\label{mac6}
u'_{i}(\tilde{f}(A_{\pi} \otimes B_{\sigma})) &=\mathrm{tr}\left(\tilde{f}(A_{\pi}\otimes B_{\sigma})|\Psi\rangle \langle \Psi| \left(\tilde{f}(A_{\pi}\otimes B_{\sigma})\right)^TM'_{i}\right)\nonumber \\  &= \mathrm{tr}\left((A_{\pi^*}A_{\pi} \otimes B_{\sigma^*}B_{\sigma})|\Psi\rangle \langle \Psi|(A_{\pi^*}A_{\pi}\otimes B_{\sigma^*}B_{\sigma})^T\left(A_{\pi^{*}}\otimes B_{\sigma^{*}}\right)M_{i}\left(A_{\pi^{*}}\otimes B_{\sigma^{*}}\right)^T\right) \nonumber \\  &=\mathrm{tr}\left(\left(A_{\pi^{*}}\otimes B_{\sigma^{*}}\right)^T(A_{\pi^*}A_{\pi} \otimes B_{\sigma^*}B_{\sigma})|\Psi\rangle \langle \Psi|(A_{\pi^*}A_{\pi}\otimes B_{\sigma^*}B_{\sigma})^T\left(A_{\pi^{*}}\otimes B_{\sigma^{*}}\right)M_{i}\right)\nonumber \\
&=\mathrm{tr}\left(\left(A^T_{\pi^{*}}A_{\pi^{*}}A_{\pi}\otimes B^T_{\sigma^{*}}B_{\sigma^{*}}B_{\sigma}\right)|\Psi\rangle \langle \Psi|\left(A^T_{\pi}A^T_{\pi^{*}}A_{\pi^{*}}\otimes B^T_{\sigma}B^T_{\sigma^{*}}B_{\sigma^*}\right)M_{i}\right)\nonumber \\  &=\mathrm{tr}\left(\left(A_{\pi}\otimes B_{\sigma}\right)|\Psi\rangle \langle \Psi|\left(A_{\pi}\otimes B_{\sigma}\right)^{T}M_{i}\right) = u_{i}(A_{\pi}\otimes B_{\sigma}). 
\end{align}
This finishes the proof.
\end{proof}
Note that operators (\ref{operatorpermutation2}) come down to $\mathds{1}$ and $\sigma_{x}$ for $n=1$. Therefore, the original MW scheme preserves the isomorphism. The same conclusion can be drawn for the refined MW scheme (\ref{schememoj}).
\begin{corollary}
If $\Gamma$ and $\Gamma'$ are strongly isomorphic bimatrix games and $\Gamma_{Q}$ and $\Gamma'_{Q}$ are the corresponding games defined by (\ref{schememoj}). Then $\Gamma_{Q}$ and $\Gamma'_{Q}$ are strongly isomorphic.
\end{corollary}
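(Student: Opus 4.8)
The plan is to run the proof of Proposition~\ref{prop} once more, the only new point being that the measurement operators and the state are now wrapped inside the operator $H$ of~(\ref{positiveoperator1}); so I have to check that conjugating the final density operator $\rho_{f}$ by a permutation supported on the action qubits~$3,4$ still transports everything correctly, even though $H$ is not of the product form $|\Psi\rangle\langle\Psi|$ used in~(\ref{propermw}).

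First I would strip off the numbering of the players exactly as in the paragraph preceding Proposition~\ref{prop}: relabelling the two players in scheme~(\ref{schememoj}) amounts to interchanging qubits $1\leftrightarrow 2$ and $3\leftrightarrow 4$, which replaces $H$ by the operator of the same shape built from $\mathrm{SWAP}\,|\Psi\rangle$, and the game mapping that keeps every $P_{\bullet}\otimes U_{\bullet}$ and swaps the two copies of the strategy set is a strong isomorphism by the very computation carried out there. Hence I may assume the given isomorphism is $f=(\mathrm{id}_{N},\varphi_{1},\varphi_{2})$, so that $\Gamma'$ arises from $\Gamma$ by reordering its two rows according to $\varphi_{1}$ and its two columns according to $\varphi_{2}$. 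Since each of $\varphi_{1},\varphi_{2}$ permutes a two-element set, the associated permutation matrices $R_{1},R_{2}$ on $\mathbb{C}^{2}$ lie in $\{\mathds{1},\sigma_{x}\}$; moreover $S_{1},S_{2}$ and $H$ are literally the same in $\Gamma_{Q}$ and $\Gamma'_{Q}$ (none of them involves the payoffs), whereas the measurement operators obey $M'_{i}=\mathbf{R}\,M_{i}\,\mathbf{R}$ with $\mathbf{R}=\mathds{1}^{(1)}\otimes\mathds{1}^{(2)}\otimes R_{1}^{(3)}\otimes R_{2}^{(4)}$, this last identity being a routine verification like the one in the proof of Proposition~\ref{prop}.

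Then I would take the game mapping $\tilde f=(\mathrm{id}_{N},\tilde\varphi_{1},\tilde\varphi_{2})$ with $\tilde\varphi_{1}(P^{(1)}_{i}\otimes U^{(3)}_{j})=P^{(1)}_{i}\otimes R_{1}U^{(3)}_{j}$ and $\tilde\varphi_{2}(P^{(2)}_{k}\otimes U^{(4)}_{l})=P^{(2)}_{k}\otimes R_{2}U^{(4)}_{l}$, which are well-defined bijections of $S_{1},S_{2}$ since $\{\mathds{1},\sigma_{x}\}$ is closed under multiplication and consists of involutions. Writing $W(s)=P^{(1)}_{i}\otimes P^{(2)}_{k}\otimes U^{(3)}_{j}\otimes U^{(4)}_{l}$ for the combined strategy operator of a profile $s$ as in~(\ref{pierwszydensity}), the relations $R_{1}^{2}=R_{2}^{2}=\mathds{1}$ give $\mathbf{R}\,W(\tilde f(s))=W(s)=W(\tilde f(s))\,\mathbf{R}$, so that, using $\rho_{f}=W(s)HW(s)$, the identity $M'_{i}=\mathbf{R}M_{i}\mathbf{R}$ and cyclicity of the trace, one obtains
\begin{align*}
u'_{i}(\tilde f(s)) &= \mathrm{tr}\bigl(W(\tilde f(s))\,H\,W(\tilde f(s))\,M'_{i}\bigr) = \mathrm{tr}\bigl(\mathbf{R}\,W(\tilde f(s))\,H\,W(\tilde f(s))\,\mathbf{R}\,M_{i}\bigr)\\
&= \mathrm{tr}\bigl(W(s)\,H\,W(s)\,M_{i}\bigr) = u_{i}(s)
\end{align*}
for $i=1,2$ and every strategy profile $s$. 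Hence $\tilde f$ is a strong isomorphism, and together with the player-interchange step this proves the corollary.

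The computation is short; the part that needs care is the reduction to $\eta=\mathrm{id}_{N}$ and the way $\mathbf{R}$ interacts with $H$. The decisive observation — and the reason the argument survives the passage from~(\ref{propermw}) to the more complicated scheme~(\ref{schememoj}) — is that $\mathbf{R}$ is absorbed on both sides into the two copies of $W(\tilde f(s))$ and never has to be commuted through $H$; the projectors $P_{i}$ sit on the untouched qubits $1,2$, so nothing on those qubits is disturbed. The only real bookkeeping obstacle is being explicit, in the player-interchange case, that ``the corresponding quantum game'' for the player-swapped $\Gamma'$ is the one built from $\mathrm{SWAP}\,|\Psi\rangle$, exactly as in the paragraph preceding Proposition~\ref{prop}.
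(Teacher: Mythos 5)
Your proposal is correct and follows essentially the same route as the paper: the isomorphism you construct, $P^{(1)}_{i}\otimes U^{(3)}_{j}\mapsto P^{(1)}_{i}\otimes R_{1}U^{(3)}_{j}$ (resp.\ $R_{2}$ on the fourth factor), is exactly the lift $\tilde{\xi}_{1(2)}$ of the Proposition's $\tilde{\varphi}_{1(2)}$ used there, and the payoff identity is verified by the same conjugation-of-$M_{i}$ plus cyclicity-of-the-trace computation as in~(\ref{mac6}). The only (harmless) presentational difference is that you absorb $\mathbf{R}$ directly into the strategy operators without touching $H$, whereas the paper first splits $H$ into the cases $\rho=|\Psi\rangle\langle\Psi|$ and $\rho=|00\rangle\langle00|$ before appealing to the computation of~(\ref{mac6}).
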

\begin{proof}
Let $\Gamma$ and $\Gamma'$ be strongly isomorphic $2\times 2$ bimatrix games. By Proposition~\ref{prop} there exists a strong isomorphism $\tilde{f} = (\mathrm{id}_{\{1,2\}}, \tilde{\varphi}_{1}, \tilde{\varphi}_{2})$ between the games $\Gamma_{Q}$ and $\Gamma'_{Q}$ played according to (\ref{propermw})-(\ref{dopropermw}). Given the quantum approach (\ref{schememoj}) to $\Gamma$ and $\Gamma'$ we define $\tilde{g} = (\mathrm{id}_{\{1,2\}}, \tilde{\xi}_{1}, \tilde{\xi}_{2})$, where 
\begin{align}\begin{split}
&\tilde{\xi}_{1}\colon S_{1} \to S_{1}, \; \tilde{\xi}_{1}\left(P^{(1)}_{i}\otimes U^{(3)}_{j}\right) = P^{(1)}_{i}\otimes \tilde{\varphi}_{1}\left(U^{(3)}_{j}\right),\\
&\tilde{\xi}_{2}\colon S_{2} \to S_{2}, \; \tilde{\xi}_{2}\left(P^{(2)}_{k}\otimes U^{(4)}_{l}\right) = P^{(2)}_{k}\otimes \tilde{\varphi}_{2}\left(U^{(4)}_{l}\right). \end{split}
\end{align}
Now, we have
\begin{align}\label{eqwniosek}
&u'_{i}\left(\tilde{g}\left(P^{(1)}_{i}\otimes P^{(2)}_{k}\otimes U^{(3)}_{j}\otimes U^{(4)}_{l}\right)\right) \nonumber\\
&\quad = \mathrm{tr}\left(\tilde{g}\left(P^{(1)}_{i}\otimes P^{(2)}_{k}\otimes U^{(3)}_{j}\otimes U^{(4)}_{l}\right)H\tilde{g}\left(P^{(1)}_{i}\otimes P^{(2)}_{k}\otimes U^{(3)}_{j}\otimes U^{(4)}_{l}\right)^TM'_{1}\right)\nonumber \\
&\quad =  \mathrm{tr}\left(P^{(1)}_{i} \otimes P^{(2)}_{k}\otimes \tilde{\varphi}_{1}\left(U^{(3)}_{j}\right) \otimes \tilde{\varphi}_{2}\left(U^{(4)}_{l}\right) H P^{(1)}_{i} \otimes P^{(2)}_{k}\otimes \tilde{\varphi}_{1}\left(U^{(3)}_{j}\right)^T \otimes \tilde{\varphi}_{2}\left(U^{(4)}_{l}\right)^TM'_{i}\right)\nonumber\\
&\quad =  \mathrm{tr}\left(P^{(1)}_{i} \otimes P^{(2)}_{k}\otimes \tilde{f}\left(U^{(3)}_{j} \otimes U^{(4)}_{l}\right) H P^{(1)}_{i} \otimes P^{(2)}_{k}\otimes \tilde{f}\left(U^{(3)}_{j} \otimes U^{(4)}_{l}\right)^T M'_{i}\right).
\end{align}
For fixed $P_{i}\otimes P_{k}$, we can write the right side of~(\ref{eqwniosek}) in the form
\begin{equation}
\mathrm{tr}\left(|ik\rangle \langle ik| \otimes \tilde{f}\left(U^{(3)}_{j} \otimes U^{(4)}_{l}\right) \rho\tilde{f}\left(U^{(3)}_{j} \otimes U^{(4)}_{l}\right)^TM'_{i}\right), \; \rho = \begin{cases}|\Psi\rangle \langle \Psi|, & (i,j) = (1,1);\\ |00\rangle \langle 00|, & (i,j) \ne (1,1).\end{cases}
\end{equation}
By reasoning similar to~(\ref{mac6}) we conclude that
\begin{align}
&u'_{i}\left(\tilde{g}\left(P^{(1)}_{i}\otimes P^{(2)}_{k}\otimes U^{(3)}_{j}\otimes U^{(4)}_{l}\right)\right)\nonumber\\
&\quad = \mathrm{tr}\left(|ik\rangle \langle ik| \otimes \tilde{f}\left(U^{(3)}_{j} \otimes U^{(4)}_{l}\right) \rho\tilde{f}\left(U^{(3)}_{j} \otimes U^{(4)}_{l}\right)^TM'_{i}\right)\nonumber\\
&\quad = \mathrm{tr}\left(|ik\rangle \langle ik| \otimes U^{(3)}_{j} \otimes U^{(4)}_{l} \rho \left(U^{(3)}_{j} \otimes U^{(4)}_{l}\right)^TM'_{i}\right)\nonumber\\
&\quad = u_{i}\left(P^{(1)}_{i}\otimes P^{(2)}_{k}\otimes U^{(3)}_{j}\otimes U^{(4)}_{l}\right).
\end{align}
We have thus proved that $\Gamma_{Q}$ and $\Gamma'_{Q}$ are isomorphic.
\end{proof}
It is worth noting that the converse may not be true. Given isomorphic games $\Gamma_{Q}$ and $\Gamma'_{Q}$, the input games $\Gamma$ and $\Gamma'$ may not determine the strong isomorphism. Indeed, bimatrix games
\begin{equation}\label{ostatniegames}
\bordermatrix{& l & r \cr t & (3,1) & (0,0) \cr b & (0,0) & (1,3)} \quad \mbox{and} \quad \bordermatrix{& l' & r \cr t' & (4,0) & (0,0) \cr b' & (0,0) & (0,4)}
\end{equation}
are not strongly isomorphic. However the MW approach (with the initial state $(|00\rangle + |11\rangle)/\sqrt{2}$) to each one of (\ref{ostatniegames}) implies the same output game given by
\begin{equation}
\bordermatrix{& \mathds{1} & \sigma_{x} \cr \mathds{1} & (3,1) & (0,0) \cr \sigma_{x} & (0,0) & (1,3)}
\end{equation}
\section{Conclusions}
The theory of quantum games does not provide us with clear definitions of how a quantum game should look like. In fact, only one condition is taken into consideration. A quantum game scheme is merely required to generalize the classical game. As a result, this allows us to define a quantum game scheme in many different ways. However, a wide variety of techniques to describe a game in the quantum domain can imply different quantum game results. Therefore, it would be convenient to specify that some quantum schemes work under some further restrictions. We have been working under the assumption that a quantum scheme is invariant with respect to isomorphic transformations of an input game. We have shown that this requirement may be essential tool in defining a quantum scheme. The protocol that replicates classical correlated equilibria is an example that does not satisfy our criterion. The refined definition for a quantum game scheme may also be useful to generalize protocols. Our work has shown that dependence of local unitary operators in the MW scheme on the number of strategies in a classical game is not linear. In fact, the generalized approach to $n\times m$ bimatrix game can be identified with a game of dimension $n! \times m!$.

%%%%%%%%%%%%%%%%%%%%%%%%
\section*{Acknowledges}
This work was supported by the Ministry of Science and Higher Education in Poland under the Project Iuventus Plus IP2014 010973 in the years 2015--2017.

%%%%%%%%%%%%%%%%%%%%%%%%%%%%%%%%%%%%%%%%%%%

\end{document}